\newtheorem{proposition}{Proposition}
\begin{document}
 \title{Weyl semimetals and spin$^c$ cobordism}

\author{\"Umit Ertem}
 \email{umitertemm@gmail.com}
\address{Independent Researcher, Ankara, Turkey\\}

\begin{abstract}

Classification of topological insulators and superconductors is manifested in terms of spin cobordism groups for lower dimensions. It is discussed that the periodic table of topological insulators is a result of the possible choices of spin structures on Brillouin zones of relevant topological materials. This framework is extended to the case of Weyl semimetals. It is shown that the classification of Weyl semimetals can be managed in terms of spin$^c$ cobordism groups via the extension of spin structures to spin$^c$ structures. Topological invariants of Weyl semimetals are connected to the topological invariants of spin$^c$ cobordism groups and Fermi arcs of Weyl semimetals are interpreted in terms of the choices of spin$^c$ structures.
\\
Keywords: Weyl semimetals, spin$^c$ structures, cobordism groups

\end{abstract}

\maketitle

\section{Introduction}

The classification of the various types of recently discovered topological phases of matter is one of the prominent research topics in theoretical condensed matter physics literature. Topological phases of matter manifest themselves as the topologically protected band gaps or band touching points in the band structure of different classes of Hamiltonians. Topological insulators and topological superconductors have topologically protected bulk band gaps and edge conducting states which differ them from the ordinary insulators \cite{Hasan Kane, Qi Zhang}. Low energy excitations of the band structure of topological insulators can be described by Dirac Hamiltonians. Different types of anti-unitary symmetries, like time-reversal and charge conjugation, give way to the existence of topological insulators and they can be characterized by topological invariants defined from the eigenstates of the corresponding Dirac Hamiltonians. There are two types of topological insulators and superconductors that are $\mathbb{Z}$-insulators and $\mathbb{Z}_2$-insulators  and the topological invariants corresponding to these classes take values in $\mathbb{Z}$ and $\mathbb{Z}_2$ groups which are described by Chern numbers or winding numbers and Kane-Mele invariants or Chern-Simons invariants, respectively \cite{Ryu Schnyder Furusaki Ludwig, Wang Qi Zhang}. In fact, all these topological invariants correspond to the indices of Dirac operators defined from the Dirac Hamiltonians of topological materials \cite{Ertem}. Classification topological insulators and superconductors can be maintained by using the Bott periodicity of K-theory groups and Clifford algebras which results to a periodic table of topological insulators and superconductors for different symmetry classes and dimensions \cite{Kitaev, Stone Chiu Roy}. The periodic table of topological insulators and superconductors can also be obtained from the Clifford chessboard of Clifford algebras and the index theorems of Dirac operators \cite{Ertem}.

Another type of topological materials is topological semimetals which have topologically protected band touching points or nodal lines in their band structure. Low energy excitations of topological semimetals are also described by Dirac Hamiltonians. Those which have topologically protected band touching points in the band structure are called Weyl semimetals or Dirac semimetals depending on the symmetry structure of the Hamiltonians. On the other hand, topological semimetals having band touching nodel lines are called topological nodal-line semimetals \cite{Bernevig Weng Fang Dai, Turner Vishwanath}. For Weyl semimetals, there is no symmetry restriction on the corresponding Hamiltonians and adding mass terms or perturbative terms cannot open a gap in the band structure which only moves the band touching points through the Brillouin zone (BZ). Topologically protected band touching points in Weyl semimetals are called Weyl points. Weyl semimetals correspond to the boundary phases of topological insulators since there must be a gapless phase between the boundary of different classes of topological insulators which correspond to a Weyl semimetal phase. In this respect, Weyl semimetals can be described as the extensions of topological insulators to gapless semimetal phases. Topological invariants characterizing the Weyl semimetals arise from the Chern numbers calculated over the spheres around the Weyl points. Moreover, depending on the symmetry structure of the Hamiltonian, band structure out of the Weyl points can give a nontrivial topological invariant equal to the corresponding topological insulator phase \cite{Mathai Thiang1, Thiang Sato Gomi}. Since Weyl points can have different Chern numbers, there will be different types of gapped phases between different Weyl points and these gapped phases will have different edge structures at the boundary. For nontrivial gapped phase parts between Weyl points, there will be nontrivial edge currents which correspond to the Fermi arcs in Weyl semimetals which are the experimental signatures of these materials \cite{Xu et al, Mathai Thiang2}. Although the topological invariants characterizing Weyl semimetals can be described for various cases, a complete classification scheme for different types and dimensions as in the topological insulators case is not yet known.

In this paper, we consider the classification of topological insulators and superconductors in terms of K-theory groups and relate it to spin cobordism groups. Spin cobordism groups classify the possible choices of spin structures on manifolds and we show that the periodic table of topological insulators and superconductors can be derived from the possible choices of spin structures on the BZ of topological materials for dimensions less than or equal to seven. Hence, topological invariants classifying the spin cobordism classes are equivalent to the topological invariants corresponding to relevant topological insulators. Moreover, spin structures can be extended to spin$^c$ structures via characteristic surfaces depending on some topological restrictions. We discuss that the extension of spin structures to spin$^c$ structures is equivalent to the extension of topological insulators to Weyl semimetals via the spheres around Weyl points. From this equivalence, we show that Weyl semimetals can be characterized by the possible choices of spin$^c$ structures on the BZ and hence they can be classified via spin$^c$ cobordism groups. Topological invariants corresponding to the topological insulator parts of Weyl semimetals are determined by the indices of spin$^c$ Dirac operators and the equations determining the monopole charges around Weyl points are in the same form with Seiberg-Witten (SW) equations which determine the possible choices of spin$^c$ structures. So, topological invariants of Weyl semimetals are equivalent to the topological invariants of spin$^c$ cobordism classes. On the other hand, it is also discussed that the Fermi arcs on the surfaces of Weyl semimetals are direct consequences of the choices of spin$^c$ structures  and hence the spin$^c$ cobordism classes of Weyl semimetals.

The paper is organized as follows. In Section 2, we discuss the general properties of topological insulators and Weyl semimetals. Section 3 deals with the classification of topological insulators and superconductors in terms of spin cobordism groups and corresponding topological invariants. In Section 4, the classification of Weyl semimetals in terms of spin$^c$ cobordism groups is considered and correspondences of topological invariants and Fermi arcs with spin$^c$ structures are discussed. Section 5 concludes the paper.

\section{Topological materials}

Low energy effective Hamiltonians of topological insulators and superconductors are characterized by gapped Dirac Hamiltonians in various dimensions and symmetry classes. In terms of the momentum variables $\textbf{k}$, these Hamiltonians are given in the following form
\begin{equation}
H(\bf{k})=\bf{d}(\bf{k}).\bf{\sigma}
\end{equation}
where $\bf{d}(\bf{k})$ are functions of $\textbf{k}$ and $\bf{\sigma}$ are Clifford algebra generators. The eigenvalue equation of this Hamiltonian can be written in terms of $n$ eigenstates $u_n(\bf{k})$ and $n$ eigenvalues $E_n(\bf{k})$ as
\begin{equation}
H({\bf{k}})u_n({\bf{k}})=E_n({\bf{k}})u_n({\bf{k}}).
\end{equation}
$u_n(\bf{k})$ correspond to the sections of the Bloch bundle over the BZ of the system and the Dirac Hamiltonian given in (1) correspond to the Dirac operator on this bundle. The characteristic property of the Dirac Hamiltonians of topological insulators is that they are stable against perturbations, namely the band gaps of these Hamiltonians cannot be closed by symmetry preserving perturbations without changing the topological class of the system. To convert one topological class Hamiltonian to another one, there must be a gapless phase between these two classes. Hence, the insulating phase must disappear at the boundary and these correspond to edge currents of the system at the boundary. Topological phases of topological materials are characterized by topological invariants defined from the eigenvalues of the Dirac Hamiltonians. There are two types of topological invariants for topological materials depending on the symmetry class and dimension of the system. These are called $\mathbb{Z}$ and $\mathbb{Z}_2$ invariants depending on the group that the topological invariants take values. For example, $\mathbb{Z}$ invariant can be constructed from the eigenstates of the Dirac Hamiltonian by defining the Berry connection
\begin{equation}
A_n({\bf{k}})=i<u_n({\bf{k}})|\nabla_k|u_n({\bf{k}})>
\end{equation}
and Berry curvature $F_n({\bf{k}})=\nabla_k\times A_n({\bf{k}})$. Then, the topological invariant correspond the integral of the Berry curvature over the BZ which is the first Chern number of the Bloch bundle
\begin{equation}
C_1=\frac{1}{2\pi}\int_{BZ}F({\bf{k}})dk.
\end{equation}
Similarly, for a $\mathbb{Z}_2$ class topological insulator, the topological invariant can be constructed from the eigenstates by defining the sewing matrix
\begin{equation}
w_{ij}({\bf{k}})=<u_i(-{\bf{k}})|T|u_j({\bf{k}})>
\end{equation}
where $T$ is the time-reversal operator. Then, the Kane-Mele $\mathbb{Z}_2$ invariant is defined along the time-reversal invariant momentum points $\Lambda_k$ in terms of the Pfaffian and determinant of the sewing matrix as follows
\begin{equation}
(-1)^{\nu}=\prod_{\lambda\in\Lambda_k}\frac{\text{Pf}(w(\lambda))}{\sqrt{\det w(\lambda)}}.
\end{equation}
In fact, these topological invariants correspond to the indices of Dirac operators corresponding to the Dirac Hamiltonians of topological materials \cite{Ertem}.

There also exist topological materials that have no band gaps which are called topological semimetals. The defining characteristic of topological semimetals is that they have band touching points or nodes at the Fermi energy of the band structure. In ordinary metals these band touching points can be removed by perturbations. However, in topological semimetals the band touching points are stable against the symmetry preserving perturbations. The prominent examples of topological semimetals are Weyl semimetals whose low energy excitations around band touching points satisfy the Weyl equation which corresponds to the massless Dirac quation for chiral spinors. So, the band touching points in Weyl semimetals are called Weyl points. The Bloch Hamiltonian for Weyl semimetals is given in a similar way to other Dirac materials
\begin{equation}
H({\bf{k}})={\bf{d}}({\bf{k}}).{\bf{\sigma}}
\end{equation}
and it has zero eigenvalues at Weyl points. Spectrum of the Hamiltonian corresponds to $\pm|{\bf{h}}({\bf{k}})|$ and for ${\bf{h}}({\bf{k}})=0$, we have the Weyl points. When we add mass terms or other perturbation terms to the Hamiltonian, the Weyl points will move along the BZ but the Hamiltonian cannot be gapped in this way. However, if there are two Weyl points with opposite charges, then they can move towards each other and annihilate each other which results a gap in the band structure. Weyl semimetals can be thought as the boundary phases between different topological insulator classes since they correspond to band touching phases between different gapped topological classes. Weyl semimetals can be characterized by two types of topological invariants. If we omit the Weyl points in the band structure, then the resulting band will correspond to a gapped topological insulator and the topological invariants characterizing this topological insulator phase will also characterize the gapped part of the Weyl semimetal. The second topological invariant characterizing Weyl semimetals comes from the monopole charges calculated around the Weyl points. If we draw a 2-dimensional sphere $S_w$ around the Weyl point $w$, we can calculate a Chern number around this surface from the defined Berry curvature of the Hamiltonian which is in the form
\begin{equation}
F_{ij}=\epsilon_{ijl}\frac{{\bf{k}}_l}{2|{\bf{k}}|^3}
\end{equation}
where $\epsilon_{ijl}$ is the antisymmetric tensor. In fact, this invariant corresponds to the index of the restricted Hamiltonian $\hat{{\bf{h}}}_w$ of ${\bf{h}}({\bf{k}})$ to $S_w$. So, the topological invariants of gapped parts and the monopole charges around Weyl points correspond to the topological invariants of Weyl semimetals. One of the most interesting features of the existence of Weyl points in Weyl semimetals is the presence of Fermi arcs. For example, if there are two Weyl points in a semimetal, then there will be different topological invariants between and out of Weyl points which correspond to the gapped part invariants. So, there is a phase change between gapped part invariants and if we consider the projection of Weyl points to the boundary, then there will be a local edge current between Weyl points when there is a nontrivial topological invariant in the bulk between them. These local edge currents at the boundary are called Fermi arcs and they are consequences of the gapped part invariants of Weyl semimetals.

\section{Topological insulators and spin cobordism groups}

Bulk insulating and edge conducting topological materials can be classified in an organized manner for various dimensions and symmetry classes. For different symmetry classes corresponding to Hamiltonians in the presence or absence of time-reversal (T), charge conjugation (C) and chiral (S) symmetries, a periodic table determining the presence or absence of topological phases in various dimensions can be constructed as in Table I \cite{Kitaev}. Topological phases in $\mathbb{Z}$ or $\mathbb{Z}_2$ classes are classified by relevant topological invariants related to the indices of the Dirac Hamiltonians \cite{Ertem}. 

\begin{table}[h]
\centering
\begin{tabular}{c| c c c|c c c c c c c c}

\hline \hline
$\text{label}$ & $T$ & $C$ & $S$ & $0$ & $1$ & $2$ & $3$ & $4$ & $5$ & $6$ & $7$ \\ \hline
$\text{A}$ & $0$ & $0$ & $0$ & $\mathbb{Z}$ & $0$ & $\mathbb{Z}$ & $0$ & $\mathbb{Z}$ & $0$ & $\mathbb{Z}$ & $0$ \\
$\text{AIII}$ & $0$ & $0$ & $1$ & $0$ & $\mathbb{Z}$ & $0$ & $\mathbb{Z}$ & $0$ & $\mathbb{Z}$ & $0$ & $\mathbb{Z}$ \\ \hline
$\text{AI}$ & $1$ & $0$ & $0$ & $\mathbb{Z}$ & $0$ & $0$ & $0$ & $\mathbb{Z}$ & $0$ & $\mathbb{Z}_2$ & $\mathbb{Z}_2$ \\
$\text{BDI}$ & $1$ & $1$ & $1$ & $\mathbb{Z}_2$ & $\mathbb{Z}$ & $0$ & $0$ & $0$ & $\mathbb{Z}$ & $0$ & $\mathbb{Z}_2$ \\
$\text{D}$ & $0$ & $1$ & $0$ & $\mathbb{Z}_2$ & $\mathbb{Z}_2$ & $\mathbb{Z}$ & $0$ & $0$ & $0$ & $\mathbb{Z}$ & $0$ \\
$\text{DIII}$ & $-1$ & $1$ & $1$ & $0$ & $\mathbb{Z}_2$ & $\mathbb{Z}_2$ & $\mathbb{Z}$ & $0$ & $0$ & $0$ & $\mathbb{Z}$ \\
$\text{AII}$ & $-1$ & $0$ & $0$ & $\mathbb{Z}$ & $0$ & $\mathbb{Z}_2$ & $\mathbb{Z}_2$ & $\mathbb{Z}$ & $0$ & $0$ & $0$ \\
$\text{CII}$ & $-1$ & $-1$ & $1$ & $0$ & $\mathbb{Z}$ & $0$ & $\mathbb{Z}_2$ & $\mathbb{Z}_2$ & $\mathbb{Z}$ & $0$ & $0$ \\
$\text{C}$ & $0$ & $-1$ & $0$ & $0$ & $0$ & $\mathbb{Z}$ & $0$ & $\mathbb{Z}_2$ & $\mathbb{Z}_2$ & $\mathbb{Z}$ & $0$ \\
$\text{CI}$ & $1$ & $-1$ & $1$ & $0$ & $0$ & $0$ & $\mathbb{Z}$ & $0$ & $\mathbb{Z}_2$ & $\mathbb{Z}_2$ & $\mathbb{Z}$ \\ \hline \hline

\end{tabular}
\caption{Periodic table of topological insulators and topological superconductors. Labels in the first column denote the Cartan classes of Hamiltonians, $T$, $C$ and $S$ denote the time-reversal, charge conjugation and chiral symmetries, respectively while 0 corresponds to the absence of the symmetry, +1 or -1 corresponds to the square of the symmetry operation. Numbers in the following columns denote the dimension and $\mathbb{Z}$ and $\mathbb{Z}_2$ groups correspond to the number of topological phases while 0 corresponds to the absence of topological phases.}
\end{table}

Since the electronic band structure of topological materials can be described in terms of vector bundles on the BZ, the classification of topological phases can be derived from K-theory groups of the point which classify the stably equivalent vector bundles \cite{Kitaev}. The first two rows in the periodic table correspond to complex classes and classified by $K^{-k}(\text{pt})$ groups which are
\[
K(\text{pt})=\mathbb{Z}\quad\quad,\quad\quad K^{-1}(\text{pt})=0
\]
and periodic with respect to mod 2. The remaining eight rows in the periodic table correspond to real classes and classified by the real K-theory groups $KO^{-k}(\text{pt})$ which are given as follows

\quad\\
{\centering{
\begin{tabular}{c c c c c c c c c}

\hline \hline
$k$ & $0$ & $1$ & $2$ & $3$ & $4$ & $5$ & $6$ & $7$ \\ \hline
$KO^{-k}(\text{pt})$ & $\mathbb{Z}$\, & $\mathbb{Z}_2$ & $\mathbb{Z}_2$ & $0\,\,$ & $\mathbb{Z}$\,\, & $0\,\,$ & $0\,\,$ & $0$ \\ \hline \hline

\end{tabular}}
\quad\\
\quad\\
\quad\\}

and periodic with respect to mod 8. So, the periodic table of topological insulators and superconductors is in one-to-one correspondence with K-theory groups given in the Tables II and III. The details of this construction can be found in \cite{Ertem}.
\begin{table}
\centering
\begin{tabular}{c c c}
\hline \hline
$K^{-(s-n) (\text{mod }2)}$ & $n=0$ & $1$ \\ \hline
$s=0$ & $\mathbb{Z}$ & $0$ \\
$1$ & $0$ & $\mathbb{Z}$ \\ \hline \hline
\end{tabular}
\caption{Complex K-theory goups at the point.}
\end{table}

\begin{table}
\centering
\begin{tabular}{c c c c c c c c c}

\hline \hline
$KO^{-(s-n) (\text{mod }8)}(\text{pt})$ & $n=0$ & $1$ & $2$ & $3$ & $4$ & $5$ & $6$ & $7$ \\ \hline
$s=0$ & $\mathbb{Z}$ & $0$ & $0$ & $0$ & $\mathbb{Z}$ & $0$ & $\mathbb{Z}_2$ & $\mathbb{Z}_2$ \\
$1$ & $\mathbb{Z}_2$ & $\mathbb{Z}$ & $0$ & $0$ & $0$ & $\mathbb{Z}$ & $0$ & $\mathbb{Z}_2$ \\
$2$ & $\mathbb{Z}_2$ & $\mathbb{Z}_2$ & $\mathbb{Z}$ & $0$ & $0$ & $0$ & $\mathbb{Z}$ & $0$ \\
$3$ & $0$ & $\mathbb{Z}_2$ & $\mathbb{Z}_2$ & $\mathbb{Z}$ & $0$ & $0$ & $0$ & $\mathbb{Z}$ \\
$4$ & $\mathbb{Z}$ & $0$ & $\mathbb{Z}_2$ & $\mathbb{Z}_2$ & $\mathbb{Z}$ & $0$ & $0$ & $0$ \\
$5$ & $0$ & $\mathbb{Z}$ & $0$ & $\mathbb{Z}_2$ & $\mathbb{Z}_2$ & $\mathbb{Z}$ & $0$ & $0$ \\
$6$ & $0$ & $0$ & $\mathbb{Z}$ & $0$ & $\mathbb{Z}_2$ & $\mathbb{Z}_2$ & $\mathbb{Z}$ & $0$ \\
$7$ & $0$ & $0$ & $0$ & $\mathbb{Z}$ & $0$ & $\mathbb{Z}_2$ & $\mathbb{Z}_2$ & $\mathbb{Z}$ \\ \hline \hline

\end{tabular}
\caption{Real K-theory groups at the point.}
\end{table}

The above classification of topological insulators and superconductors can also be characterized in terms of spin cobordism groups. Depending on the topological structure of manifolds, one can define equivalence relations between manifolds which give way to cobordism groups. For a manifold $M$, if the first Stiefel-Whitney class of the tangent bundle of $M$ vanishes $w_1(M)=0$, then $M$ is an oriented manifold. Two oriented $k$-manifolds $X$ and $Y$ are called equivalent if and only if they are cobordant, namely if there is an oriented $(k+1)$-manifold $W$ whose boundary consists of the union of $X$ and $Y$ that is $\partial W=\overline{X}\cup Y$ where $\overline{X}$ denotes the reversed orientation of $X$. This equivalence relation of manifolds determine the oriented cobordism group $\Omega^{SO}_k$ in dimension $k$. Moreover, if the second Stiefel-Whitney class of $M$ also vanishes $w_2(M)=0$, then $M$ corresponds to a spin manifold and the cobordism equivalence relation between spin $k$-manifolds determine the spin cobordism group $\Omega^{Spin}_k$ in dimension $k$.

\begin{proposition}
Real classes of topological insulators and superconductors can be classified by spin cobordism groups $\Omega^{Spin}_k$ for dimensions $\leq 7$.
\end{proposition}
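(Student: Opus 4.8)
\emph{Proof sketch.} The plan is to use the Atiyah--Bott--Shapiro orientation, the ring map from the spin cobordism spectrum $M\mathrm{Spin}$ to the connective real $K$-theory spectrum $ko$, which on homotopy groups yields a homomorphism $\Omega^{Spin}_k\to KO^{-k}(\mathrm{pt})$ given concretely by the $KO$-valued index of the spin Dirac operator on a closed spin $k$-manifold --- the same index that, restricted to the Bloch bundle over the BZ, produces the topological invariants discussed above. Since the preceding part of this section has already matched the real classes of the periodic table with the groups $KO^{-k}(\mathrm{pt})$ of Table III, and the Cartan label enters only through a shift of the grading, it suffices to show that this index map is an isomorphism for $k\leq 7$.

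First I would write out the spin cobordism groups in low degrees, namely $\Omega^{Spin}_0=\mathbb{Z}$, $\Omega^{Spin}_1=\mathbb{Z}_2$, $\Omega^{Spin}_2=\mathbb{Z}_2$, $\Omega^{Spin}_3=0$, $\Omega^{Spin}_4=\mathbb{Z}$, $\Omega^{Spin}_5=\Omega^{Spin}_6=\Omega^{Spin}_7=0$, generated respectively by a point, the circle with its non-bounding spin structure, the square of that generator, and the $K3$ surface normalised so that $\hat A=1$. Comparing term by term with the list $KO^{-k}(\mathrm{pt})=\mathbb{Z},\mathbb{Z}_2,\mathbb{Z}_2,0,\mathbb{Z},0,0,0$ already displayed in the text shows that the two families of groups agree abstractly throughout $0\leq k\leq 7$.

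To see that it is genuinely the Dirac-index map that effects this identification, rather than a numerical accident, I would invoke the Anderson--Brown--Peterson splitting of $M\mathrm{Spin}$: as a spectrum it is a wedge of suspensions of $ko$ and of mod-$2$ Eilenberg--MacLane spectra, and its deviation from a single leading copy of $ko$ begins only in degree $8$; rationally and at odd primes the comparison is immediate, since $\Omega^{Spin}_*$ carries only $2$-torsion and its free part in this range is the single $\mathbb{Z}$ in degree $4$ detected by $\hat A$. Hence in degrees $\leq 7$ the unit map $M\mathrm{Spin}\to ko$ is an isomorphism on homotopy groups, which is precisely the statement that the spin Dirac index detects every spin cobordism class in dimensions $\leq 7$; composing this with the earlier identification proves that the spin cobordism classes of the BZ reproduce the real-class entries of the periodic table.

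The main obstacle is this last step. The numerical coincidence of the groups is read off immediately from standard tables, but upgrading it to a statement about the geometrically meaningful invariant requires the ABS orientation together with the Anderson--Brown--Peterson structure theorem, and it is exactly the first extra wedge summand of $M\mathrm{Spin}$, appearing in dimension $8$, that forces the restriction to dimensions $\leq 7$: one has $\Omega^{Spin}_8=\mathbb{Z}\oplus\mathbb{Z}$ while $KO^{-8}(\mathrm{pt})=\mathbb{Z}$, so the bound in the proposition is sharp.
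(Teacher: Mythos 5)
Your proposal is correct and follows essentially the same route as the paper: both identify the real-class entries of the periodic table with $KO^{-k}(\mathrm{pt})$ and then transport this to $\Omega^{Spin}_k$ via the Atiyah--Bott--Shapiro index homomorphism $\widehat{\mathcal{A}}_k:\Omega^{Spin}_k\to KO^{-k}(\mathrm{pt})$, which is an isomorphism for $k\leq 7$. The only substantive difference is that where the paper simply cites this isomorphism from Lawson--Michelsohn, you justify it via the Anderson--Brown--Peterson splitting and make the sharpness at $k=8$ explicit, which strengthens rather than alters the argument (modulo the small normalization point that the $K3$ surface has $\widehat{A}=2$ and it is the $KO$-index $\tfrac{1}{2}\widehat{A}$ that equals $1$).
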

\begin{proof}
The classification of topological insulators in terms of spin cobordism groups arises from the relation between $KO$-theory groups and spin cobordism groups. There is a ring homomorphism between $\Omega^{Spin}_k$ and $KO^{-k}(\text{pt})$ \cite{Lawson Michelsohn}
\[
\widehat{\mathcal{A}}_k:\Omega^{Spin}_k\longrightarrow KO^{-k}(\text{pt}).
\]
This homomorphism is an isomorphism for $k\leq 7$. Then, we have the spin cobordism groups $\Omega^{Spin}_k$ for $k\leq 7$ as follows

\quad\\
{\centering{
\begin{tabular}{c c c c c c c c c}

\hline \hline
$k$ & $0$ & $1$ & $2$ & $3$ & $4$ & $5$ & $6$ & $7$ \\ \hline
$\Omega^{Spin}_k$ & $\mathbb{Z}$\, & $\mathbb{Z}_2$ & $\mathbb{Z}_2$ & $0\,\,$ & $\mathbb{Z}$\,\, & $0\,\,$ & $0\,\,$ & $0$ \\ \hline \hline

\end{tabular}}
\quad\\
\quad\\
\quad\\}

For $k>7$, $\Omega^{Spin}_k$ groups differ from $KO^{-k}(\text{pt})$ groups. So, there is no eightfold periodicity for spin cobordism groups and the classification of topological insulators can only be possible in terms of spin cobordism groups for dimensions $n\leq 7$. The table of spin cobordism groups which is equivalent to the table of $KO^{-k}(\text{pt})$ groups can be written as in Table IV for $k\leq 7$. Since $KO^{-k}(\text{pt})$ groups classify only the real classes of topological insulators and superconductors, the spin cobordism groups classify the real classes of topological insulators and superconductors for dimensions $\leq 7$.
\end{proof}

\begin{table}
\centering
\begin{tabular}{c c c c c c c c c}

\hline \hline
$\Omega^{Spin}_{s-n(\text{mod 8})}$ & $n=0$ & $1$ & $2$ & $3$ & $4$ & $5$ & $6$ & $7$ \\ \hline
$s=0$ & $\mathbb{Z}$ & $0$ & $0$ & $0$ & $\mathbb{Z}$ & $0$ & $\mathbb{Z}_2$ & $\mathbb{Z}_2$ \\
$1$ & $\mathbb{Z}_2$ & $\mathbb{Z}$ & $0$ & $0$ & $0$ & $\mathbb{Z}$ & $0$ & $\mathbb{Z}_2$ \\
$2$ & $\mathbb{Z}_2$ & $\mathbb{Z}_2$ & $\mathbb{Z}$ & $0$ & $0$ & $0$ & $\mathbb{Z}$ & $0$ \\
$3$ & $0$ & $\mathbb{Z}_2$ & $\mathbb{Z}_2$ & $\mathbb{Z}$ & $0$ & $0$ & $0$ & $\mathbb{Z}$ \\
$4$ & $\mathbb{Z}$ & $0$ & $\mathbb{Z}_2$ & $\mathbb{Z}_2$ & $\mathbb{Z}$ & $0$ & $0$ & $0$ \\
$5$ & $0$ & $\mathbb{Z}$ & $0$ & $\mathbb{Z}_2$ & $\mathbb{Z}_2$ & $\mathbb{Z}$ & $0$ & $0$ \\
$6$ & $0$ & $0$ & $\mathbb{Z}$ & $0$ & $\mathbb{Z}_2$ & $\mathbb{Z}_2$ & $\mathbb{Z}$ & $0$ \\
$7$ & $0$ & $0$ & $0$ & $\mathbb{Z}$ & $0$ & $\mathbb{Z}_2$ & $\mathbb{Z}_2$ & $\mathbb{Z}$ \\ \hline \hline

\end{tabular}
\caption{Spin cobordism groups for $s\leq 7$ and $n\leq 7$.}
\end{table}

Low energy limit of band structures of topological materials are described by Dirac Hamiltonians and the eigenvalue equation of these Hamiltonians correspond to the Dirac equation of pseudo-fermions. So, the BZ of topological materials correspond to spin manifolds on which the Dirac operators can be defined. There can be different spin structures on spin manifolds. Since the choice of different spin structures determine the eigenvalue structure of the Dirac operators, the topological classes determined by the spin cobordism groups are related to the choices of different spin structures on the BZ. For example, on a $d$-dimensional torus $T^d$, there are $2^d$ different spin structures and the eigenvalues of the Dirac operator on $T^d$ depend on the choice of the spin structure \cite{Bar}. The number of different spin structures on a spin manifold $M$ is determined by the first cohomology group $H^1(M, \mathbb{Z}_2)$ and the spin cobordism group $\Omega^{Spin}_k$ correspond to equivalence of $k$-manifolds with the same spin structure.

For two topological materials with common boundary, if they are in the same topological class, then there is no topological discontinuity along the boundary and there is no edge current between them. However, if they are in different topological classes, then there is a topological discontinuity at the boundary which produce edge currents along the boundary. To describe this fact in terms of spin cobordism groups, let us consider two $(k+1)$-manifolds $M$ and $M'$ with a common $k$-dimensional boundary $N$ that is $\partial M=\partial M'=N$. If $H^1(N,\mathbb{Z}_2)=0$, then there is a unique spin structure on $N$ and different spin structures on $M$ and $M'$ reduce to the same spin structure on $N$. This means that if $M$ and $M'$ correspond to BZ of topological materials, then $M$ and $M'$ are in the same topological class since we have $\Omega^{Spin}_k=0$ because of the unique spin structure on $N$. For example, if we have a Dirac Hamiltonian $\mathcal{H}_M$ on $M$ with spin structure $\sigma_M$ and a Dirac Hamiltonian $\mathcal{H}_{M'}$ on $M'$ with spin structure $\sigma_{M'}$, then they both reduce to a Dirac Hamiltonian $\mathcal{H}_N$ on the boundary $N$ with the same spin structure $\sigma_N$ and hence they are in the same topological class. On the other hand, if we have $H^1(N,\mathbb{Z}_2)\neq 0$, then there are more than one spin structure on $N$ and if spin structures of $M$ and $M'$ reduce to different spin structures on $N$, then topological materials corresponding to $M$ and $M'$ are in different topological classes since we have $\Omega^{Spin}_k\neq 0$ in that case. For example, Dirac Hamiltonian $\mathcal{H}_M$ on $M$ with spin structure $\sigma_M$ will reduce to Dirac Hamiltonian $\mathcal{H}_{N_1}$ on the boundary $N$ with spin structure $\sigma_{N_1}$ and Dirac Hamiltonian $\mathcal{H}_{M'}$ on $M'$ with spin structure $\sigma_{M'}$ will reduce to $\mathcal{H}_{N_2}$ on the boundary $N$ with another spin structure $\sigma_{N_2}$ as shown in Fig. 1. Since the eigenvalue structures of $\mathcal{H}_M$ and $\mathcal{H}_{M'}$ are determined by the chosen spin structures and these choices reduce to the different spin structures on the common boundary they will be in different topological classes in this case.

\begin{figure}
  \includegraphics[width=\linewidth]{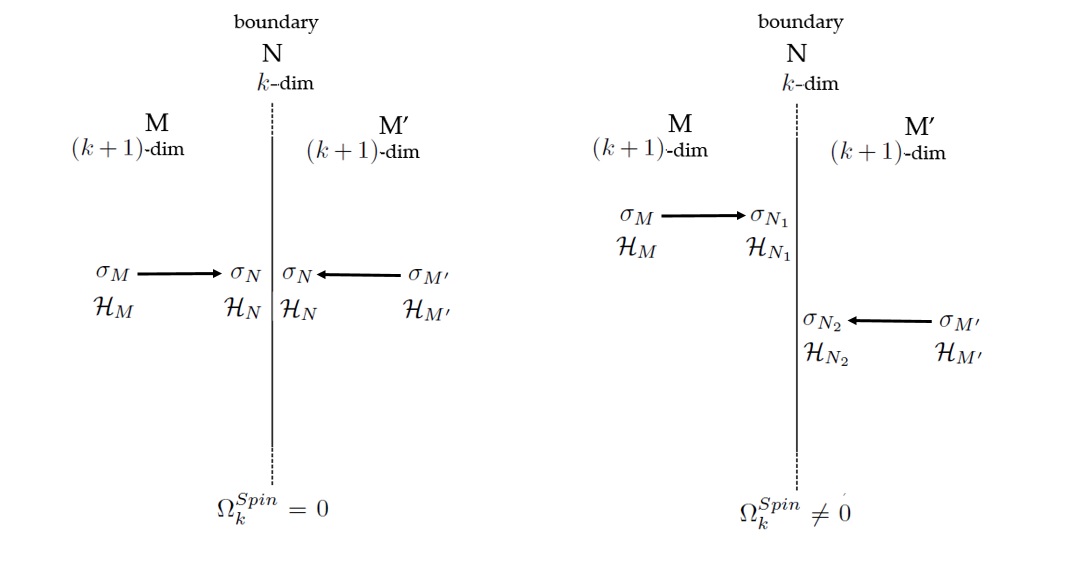}
  \caption{If the spin cobordism group vanishes at the boundary, then the Hamiltonians with different spin structures will reduce to the same spin structure and Hamiltonian at the boundary. If the spin cobordism group at the boundary is non-trivial, then the Hamiltonians with different spin structures for different topological classes will reduce to different spin structures and Hamiltonians at the boundary.}
  \label{fig:spin1}
\end{figure}

\subsection{Topological invariants}

Topological classes of topological insulators and superconductors are determined by indices of $Cl_k$-Dirac operators $\displaystyle{\not}D_k$ corresponding to Dirac Hamiltonians and those indices are valued in $KO^{-k}(\text{pt})$ groups \cite{Lawson Michelsohn}. These topological invariants are given for real classes as follows
\begin{eqnarray}
\text{ind}(\displaystyle{\not}D_k)&=&\left\{
                                                                               \begin{array}{ll}
                                                                                 \text{dim}_{\mathbb{C}}\textbf{H}_k (\text{mod }2)\in\mathbb{Z}_2, & \hbox{ for $k\equiv 1$  $(\text{mod }8)$} \\
                                                                                 \text{dim}_{\mathbb{H}}\textbf{H}_k (\text{mod }2)\in\mathbb{Z}_2, & \hbox{ for $k\equiv 2$  $(\text{mod }8)$} \\
                                                                                 \frac{1}{2}\widehat{A}(M)\in\mathbb{Z}, & \hbox{ for $k\equiv 4$  $(\text{mod }8)$} \\
                                                                                 \widehat{A}(M)\in\mathbb{Z}, & \hbox{ for $k\equiv 0$  $(\text{mod }8)$}
                                                                               \end{array}
                                                                             \right.
\end{eqnarray}
where $\textbf{H}_k$ correspond to the space of harmonic forms and dim$\textbf{H}_k(\text{mod }2)$ is also defined as the $\alpha$-genus of the manifold and $\widehat{A}(M)$ denotes the $\widehat{A}$-genus of the manifold $M$.

\begin{proposition}
The indices of Dirac operators, which are topological invariants of $\mathbb{Z}$ and $\mathbb{Z}_2$ groups appearing in the periodic table of topological insulators and superconductors, correspond to spin cobordism invariants of $\Omega^{Spin}_k$ which determine the number of different spin structures on the spin manifold.
\end{proposition}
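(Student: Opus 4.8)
The plan is to read the statement as a consequence of two facts already in play: the isomorphism $\widehat{\mathcal{A}}_k:\Omega^{Spin}_k\to KO^{-k}(\text{pt})$ of Proposition 1 (valid for $k\le 7$), and the identification of the analytic index of the $Cl_k$-Dirac operator $\displaystyle{\not}D_k$ with the value of that very map. First I would recall that, by the $Cl_k$-linear version of the Atiyah-Singer index theorem, the assignment $[M,\sigma_M]\mapsto\mathrm{ind}(\displaystyle{\not}D_k)$ is a well-defined group homomorphism $\Omega^{Spin}_k\to KO^{-k}(\text{pt})$, and that this homomorphism is precisely $\widehat{\mathcal{A}}_k$. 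Since $\widehat{\mathcal{A}}_k$ is an isomorphism for $k\le 7$, this already says that $\mathrm{ind}(\displaystyle{\not}D_k)$ lands exactly in the groups $\Omega^{Spin}_k$ of Table IV and that it separates the generators of these groups in the same way the corresponding $KO^{-k}(\text{pt})$-invariants do. By Proposition 1 these are in bijection with the nonzero $\mathbb{Z}$ and $\mathbb{Z}_2$ entries of the periodic table, which is the first half of the assertion.

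The second step is to check, case by case against the real-class index formula displayed above, that each invariant is manifestly a spin cobordism invariant, so that it genuinely descends to $\Omega^{Spin}_k$. For $k\equiv 0,4\ (\text{mod }8)$ the index is the $\widehat A$-genus (respectively one half of it), which is a characteristic number; it vanishes on any $M$ that bounds a compact spin manifold by Stokes' theorem, hence factors through $\Omega^{Spin}_k\to\mathbb{Z}$, and it is integral on spin manifolds. For $k\equiv 1,2\ (\text{mod }8)$ the index is the mod-$2$ dimension of the space of harmonic spinors, i.e.\ the $\alpha$-genus, and Atiyah's mod-$2$ index theorem together with the standard Atiyah-Milnor-type computation of $\Omega^{Spin}_\ast$ in low degrees shows this is likewise a spin cobordism invariant, factoring through $\Omega^{Spin}_k\to\mathbb{Z}_2$. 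Collecting the four cases reproduces exactly Table IV, so the Dirac indices are precisely the invariants of $\Omega^{Spin}_k$.

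Finally I would connect this to the count of spin structures. The spin structures on a fixed manifold $M$ form a torsor over $H^1(M,\mathbb{Z}_2)$, and changing the spin structure amounts to twisting $\displaystyle{\not}D_k$ by a real flat line bundle; this alters the eigenvalue structure of the operator (as in the $T^d$ example with its $2^d$ spin structures) and thereby enters the index through the formula above. One then argues that the spin cobordism invariant detecting $[M,\sigma_M]\in\Omega^{Spin}_k$ is sensitive exactly to which spin structure on the Brillouin zone has been selected, so that the Dirac index, being that invariant, is what records the spin-structure data classified by $\Omega^{Spin}_k$. The main obstacle is precisely this last identification: steps one and two are a clean repackaging of the $Cl_k$ index theorem plus Proposition 1, but the phrase ``determines the number of different spin structures'' has to be stated carefully, since the index of $\displaystyle{\not}D_k$ is constant on a bordism class and cannot by itself separate all $|H^1(M,\mathbb{Z}_2)|$ spin structures on a given $M$; the honest formulation is that the $\mathbb{Z}$/$\mathbb{Z}_2$-valued index is the universal invariant of $\Omega^{Spin}_k$, and that the freedom in choosing a spin structure on the Brillouin zone is what realizes the distinct spin bordism classes whose invariants these indices compute. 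I would therefore devote most of the work to pinning down this equivalence, the cobordism-invariance computations being standard.
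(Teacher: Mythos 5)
Your proposal follows essentially the same route as the paper: both arguments rest on identifying $\mathrm{ind}(\displaystyle{\not}D_k)$ with the image of the bordism class under the homomorphism $\widehat{\mathcal{A}}_k:\Omega^{Spin}_k\to KO^{-k}(\text{pt})$ of Proposition 1, so that for $k\le 7$ the Dirac indices are exactly the invariants of $\Omega^{Spin}_k$, and both then tie the choice of spin structure on the Brillouin zone to the eigenvalue/index data of the Dirac operator. The differences are minor but worth recording. The paper supplements the $\widehat{\mathcal{A}}_k$ argument by citing the surjective graded-ring homomorphism $\mathrm{ind}_*:\textbf{M}^{Spin}_*\to KO^{-*}(\text{pt})$ on spin-structure-preserving diffeomorphism classes and by naming the classical realizations of the invariants ($\eta$ in degree $1$, the Arf invariant in degree $2$, the signature in degrees $0$ and $4$, with the Fu--Kane--Mele invariant matched to Arf), whereas you instead verify bordism invariance case by case from the displayed index formula (Stokes for the $\widehat{A}$-genus, Atiyah's mod-$2$ index theorem for the $\alpha$-genus); either supplement is adequate, and yours is the more self-contained. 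You also correctly flag something the paper passes over: taken literally, a bordism invariant cannot ``determine the number of different spin structures'' on a fixed $M$, since that number is $|H^1(M,\mathbb{Z}_2)|$ while the index is constant on each bordism class and in general does not separate all spin structures on a given manifold (e.g.\ several of the $2^d$ spin structures on $T^d$ share the same $\alpha$-invariant). Your reformulation --- that the index is the universal invariant of $\Omega^{Spin}_k$ and that varying the spin structure on the Brillouin zone is what realizes the distinct bordism classes it detects --- is the defensible version of the claim and is consistent with how the paper actually uses the proposition in the surrounding discussion.
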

\begin{proof}
For a Dirac Hamiltonian $\mathcal{H}(k)$, there is a corresponding Dirac operator $\displaystyle{\not}D$ and the index of $\displaystyle{\not}D$ depends on the choice of the spin structure. So, $\Omega^{Spin}_k$ invariants are equivalent to $\text{ind}(\displaystyle{\not}D_k)$ because of the ring homomorphism $\Omega^{Spin}_*\longrightarrow KO^{-*}(\text{pt})$. Moreover, there is also a surjective graded-ring homomorphism between spin structure preserving diffeomorphism classes of connected spin manifolds $\textbf{M}^{Spin}_k$ and $KO^{-k}(\text{pt})$ groups \cite{Lawson Michelsohn};
\[
\text{ind}_*:\textbf{M}^{Spin}_*\longrightarrow KO^{-*}(\text{pt}).
\]
So, the chosen spin structures are dependent on the $KO^{-*}(\text{pt})$ invariants and hence on the indices of Dirac operators $\displaystyle{\not}D_k$. For example, $\mathbb{Z}_2$ invariants of ind$(\displaystyle{\not}D_k)$  for $k=1$ and $k=2$ are equivalent to $\eta$ invariant of Dirac operators for $\Omega^{Spin}_1$ and the Arf invariant for $\Omega^{Spin}_2$, respectively. Moreover, Fu-Kane-Mele $\mathbb{Z}_2$ invariant of $\mathbb{Z}_2$ topological insulators correspond to the Arf invariant of $\Omega^{Spin}_2$. $\mathbb{Z}$ invariants $\widehat{A}(M)$ are related to the signature $\sigma(M)$ of the manifold for $\Omega^{Spin}_0$ and $\Omega^{Spin}_4$

\quad\\
{\centering{
\begin{tabular}{c c c c c c c c c}

\hline \hline
$k$ & $0$ & $1$ & $2$ & $3$ & $4$ & $5$ & $6$ & $7$ \\ \hline
$\Omega^{Spin}_k$ & $\sigma(M)$\, & $\eta\,\,$ & $\text{Arf}$ & $-\,\,$ & $\sigma(M)$\,\, & $-\,\,$ & $-\,\,$ & $-$ \\ \hline \hline

\end{tabular}}
\quad\\
\quad\\
\quad\\}
\end{proof}

The eigenvalue structure of Dirac Hamiltonians determine the topological classes of topological insulators. Since the eigenvalue structure and the index of the corresponding Dirac operator depend on the chosen spin structure, spin cobordism invariants are related to the eigenvalues of Dirac Hamiltonians. For example, for $\mathbb{Z}_2$ invariants, the sign of the mass term $m$ in the Dirac Hamiltonian changes when there are odd number of pairs of zero modes of $\displaystyle{\not}D$ and this depends on the chosen spin structure. Namely, for $\text{ind}(\displaystyle{\not}D)\neq 0$, the partition function has a different sign for $m>0$ and $m<0$. This is related to the sign-changing mass terms in different topological classes of topological materials. If a massive Dirac Hamiltonian has a sign-changing mass term depending on the choice of the spin structure, then this reflects the non-trivial spin cobordism group at the boundary.

\section{Weyl semimetals and Spin$^c$ cobordism groups}

Let us denote the $d$-dimensional BZ of a Weyl semimetal by $T$. The Bloch Hamiltonian $H({\textbf{k}})$ of the Weyl semimetal will have zero eigenvalues at distinct points $w_i$ of $T$. So, we define the set of Weyl points as $W=\bigcup_iw_i$ and $W\subset T$ corresponds to the gapless points of $H({\textbf{k}})$. The Bloch Hamiltonian $H({\textbf{k}})$ is gapped out of the Weyl points, namely on $T\backslash W$ and it corresponds to the topological insulator part of the Hamiltonian of the Weyl semimetal. Topological classes of the topological insulator part of the Hamiltonian are determined by the characteristic classes in the cohomology groups $H^*(T\backslash W)$ \cite{Mathai Thiang1, Mathai Thiang2}. We consider open $d$-balls $D_{w_i}$ surrounding each of the Weyl points $w_i$ and the disjoint union of these open $d$-balls corresponds to $D_W=\bigsqcup_{w_i\in W}D_{w_i}$. Boundaries of these $d$-balls are $(d-1)$-spheres $S_{w_i}$ around each Weyl point and the disjoint union of them is $S_W=\bigsqcup_{w_i\in W}S_{w_i}$. So, we have the following relations between $T$ and its subsets
\[
T=T\backslash W\cup D_W
\]
\[
T\backslash W\cap D_W=S_W.
\]
Topological classes of Weyl semimetals can be deduced from the cohomology groups of the subsets of BZ via Mayer-Vietoris principle as discussed in \cite{Mathai Thiang1, Mathai Thiang2}. For two manifolds $M_1$ and $M_2$, Mayer-Vietoris principle corresponds to the long exact sequence of cohomology groups
\begin{eqnarray}
{...}&\longrightarrow& H^{*-1}(M_1)\oplus H^{*-1}(M_2)\longrightarrow H^{*-1}(M_1\cap M_2)\longrightarrow H^{*-1}(M_1\cup M_2)\longrightarrow\nonumber\\
&\longrightarrow& H^*(M_1)\oplus H^*(M_2)\longrightarrow H^*(M_1\cap M_2)\longrightarrow H^*(M_1\cup M_2)\longrightarrow {...}\nonumber
\end{eqnarray}
If we define $M_1=T\backslash W$ and $M_2=D_W$, the we have
\begin{eqnarray}
{...}&\longrightarrow& H^{*-1}(T\backslash W)\oplus H^{*-1}(D_W)\longrightarrow H^{*-1}(S_W)\longrightarrow H^{*-1}(T)\longrightarrow\nonumber\\
&\longrightarrow& H^*(T\backslash W)\oplus H^*(D_W)\longrightarrow H^*(S_W)\longrightarrow H^*(T)\longrightarrow {...}\nonumber
\end{eqnarray}
Here, the cohomology groups in the long exact sequence correspond to the following topological invariants
\begin{eqnarray}
H^{*-1}(T)&;& \text{topological insulator in one lower dimension}\nonumber\\
H^*(T\backslash W)&;& \text{topological insulator part (gapped part) of Weyl semimetal}\nonumber\\
H^*(S_W)&;& \text{local Weyl charges at Weyl points}\nonumber\\
H^*(T)&;& \text{total Weyl charge (sum of local charges)}\nonumber
\end{eqnarray}
These cohomological invariants correspond to the topological invariants of topological materials since they can be generalized to K-theory invariants because of the fact that K-theory is a generalized cohomology theory. Indeed, index theorems show that the characteristic classes corresponding to K-theory invariants take values in cohomology groups. As an exposition of this fact, we also have a Mayer-Vietoris sequence for K-theory groups \cite{Karoubi}
\begin{eqnarray}
{...}&\longrightarrow& K^{-*-1}(M_1)\oplus K^{-*-1}(M_2)\longrightarrow K^{-*-1}(M_1\cap M_2)\longrightarrow K^{-*-1}(M_1\cup M_2)\longrightarrow\nonumber\\
&\longrightarrow& K^{-*}(M_1)\oplus K^{-*}(M_2)\longrightarrow K^{-*}(M_1\cap M_2)\longrightarrow K^{-*}(M_1\cup M_2)\longrightarrow {...}\nonumber
\end{eqnarray}

This analysis shows that if we cut the open balls around the Weyl points of Weyl semimetal BZ, we obtain a manifold corresponding to a gapped topological insulator BZ. Then, Weyl semimetals are extensions of topological insulators by adding singular Weyl points and open balls around them to the BZ of topological insulators. We will show below that this extension is equivalent to the extension of the spin structures on topological insulator BZ to spin$^c$ structures on Weyl semimetal BZ. This means that the spin cobordism classification of topological insulators can be extended to spin$^c$ cobordism classification for Weyl semimetals.

Definition of the spin structures on a manifold $M$ can be generalized to the definition of spin$^c$ structures. When the second Stiefel-Whitney class of $M$ does not vanish $w_2(M)\neq0$, then a spin structure cannot be defined on $M$. However, if $w_2(M)$ can be lifted to an integral cohomology class $\underline{w}$, then a spin$^c$ structure can be defined and spin$^c$ structures are labelled by the cohomology group $H^2(M;\mathbb{Z})$. The integral lift $\underline{w}$ can be represented by a dual embedded characteristic surface $\Sigma$ in $M$. The complement of $\Sigma$ that is $M\backslash\Sigma$ always has a spin structure, but this structure cannot be extended across $\Sigma$. Spin$^c$ structures correspond to generalizations of spin structures to extend them across $\Sigma$ and define across all $M$ \cite{Scorpan, Sati}.

Spin$^c$ cobordism groups $\Omega^{Spin^c}_k$ can also be defined as in the case of spin cobordism groups. Two spin$^c$ $k$-manifolds $M$ and $M'$ are equivalent if there is a $(k+1)$-manifold $W$ such that $\partial W=\overline{M}\cup M'$ and $W$ has a spin$^c$ structure restricted to the spin$^c$ structures on $M$ and $M'$. $\Omega^{Spin^c}_k$ can also be described in terms of characteristic surfaces \cite{Scorpan}. Let us consider a $k$-manifold $M$ with a characteristic surface $\Sigma$ and a chosen spin structure in the complement of $\Sigma$. Similarly, we have another pair $M'$ and $\Sigma'$ with a chosen spin structure in the complement of $\Sigma'$. The spin structures in the complements of characteristic surfaces can be extended as spin$^c$ structures in all of $M$ and $M'$. A spin$^c$ cobordism equivalence between $M$ and $M'$ can be defined as follows. If there is a $(k+1)$-manifold $W$ with a chosen 3-submanifold $Y$ such that $\partial W=\overline{M}\cup M'$ and $\partial Y=\overline{\Sigma}\cup\Sigma'$ and the spin structure on $W\backslash Y$ reduces to the chosen spin structures on $M\backslash\Sigma$ and $M'\backslash\Sigma'$, then $M$ and $M'$ are called spin$^c$ cobordant. Spin$^c$ cobordism groups for $k\leq 7$ are given as in the following table

\quad\\
{\centering{
\begin{tabular}{c c c c c c c c c}

\hline \hline
$k$ & $0$ & $1$ & $2$ & $3$ & $4$ & $5$ & $6$ & $7$ \\ \hline
$\Omega^{Spin^c}_k$ & $\,\,\,\mathbb{Z}$\,\,\,\,\, & $0\,\,\,$ & $\mathbb{Z}$\,\,\, & $\,\,\,0\,\,\,$ & $\,\,\,\mathbb{Z}\oplus\mathbb{Z}$\,\,\, & $0\,\,\,$ & $\mathbb{Z}\oplus\mathbb{Z}$\,\,\, & $0$ \\ \hline \hline

\end{tabular}}
\quad\\
\quad\\
\quad\\}

We can use the extension of spin structures to spin$^c$ structures for classifying Weyl semimetals. Topological insulators can be classified in terms of the possible choices of spin structures in BZ as discussed in Section II. A similar classification can be described for Weyl semimetals in terms of spin$^c$ cobordism groups. 

\begin{proposition}
Weyl semimetals can be classified by spin$^c$ cobordism groups $\Omega^{Spin^c}_k$ for dimensions $\leq 7$ via the extension of spin structures to spin$^c$ structures.
\end{proposition}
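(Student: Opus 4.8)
The plan is to adapt the proof of Proposition 1, but with the understanding that the target is no longer a single $K$-theory group: a Weyl semimetal carries strictly more data than its gapped part, and that surplus is exactly what a choice of spin$^c$ structure records. First I would fix the dictionary already assembled above. The gapped Hamiltonian of a Weyl semimetal lives on $T\backslash W$, is symmetry-unconstrained (class A), and hence is classified by complex $K$-theory $K^{-k}(\text{pt})$ and, for $k\le 7$, by the indices of the associated Dirac operators; moreover $T\backslash W$ carries an honest spin structure, while over $T$ itself the spin structure is obstructed by the Weyl points. The Weyl data — the points $W$, thickened to $D_W$, together with the surface they span — plays precisely the role of the characteristic surface $\Sigma$ of Section IV: $T\backslash W=T\backslash\Sigma$ is spin, that spin structure fails to extend across $W$, and the extensions over all of $T$ are spin$^c$ structures labelled by $H^2(T;\mathbb{Z})$. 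So ``reattach the balls $D_W$ around the Weyl points'' is literally ``extend the spin structure on $T\backslash\Sigma$ to a spin$^c$ structure on $T$'', which is the content of the claimed equivalence.

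Next I would check that the equivalence relations match. Two Weyl semimetals lie in the same topological class when their gapped parts and the local and total Weyl charges can be deformed into one another; on the geometric side the spin$^c$ cobordism relation of Section IV asks for a $(k+1)$-manifold $W$ with a $3$-submanifold $Y$ such that $\partial W=\overline{T}\cup T'$, $\partial Y=\overline{\Sigma}\cup\Sigma'$, and the spin structure on $W\backslash Y$ restricts to those on $T\backslash\Sigma$ and $T'\backslash\Sigma'$. Running the Mayer-Vietoris sequence for $K$-theory quoted above — with $M_1=T\backslash W$, $M_2=D_W$, $M_1\cap M_2=S_W$ — through this correspondence shows that the pair consisting of the gapped-part class in $K^{-k}(\text{pt})$ and the Weyl-charge class built from $H^2(S_W;\mathbb{Z})$ and $H^2(T;\mathbb{Z})$ is a complete invariant of the Weyl semimetal, and that this is exactly the data carried by a spin$^c$ structure on $T$ modulo spin$^c$ cobordism, that is, an element of $\Omega^{Spin^c}_k$.

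Then I would close the loop with the invariants. There is a surjective homomorphism $\widehat{\mathcal{A}}^c_k:\Omega^{Spin^c}_k\to K^{-k}(\text{pt})$ — the spin$^c$ analogue of the $\widehat{\mathcal{A}}_k$ of Proposition 1 — given by the index of the spin$^c$ Dirac operator $\not{D}^c_k$; its image recovers the gapped-part (class A) invariant, while for $k=4,6$ its kernel is the extra $\mathbb{Z}$ and is detected by the monopole charge $c_1$ over the spheres $S_{w_i}$, whose defining curvature equation (8) has the same form as the Seiberg-Witten equations that enumerate spin$^c$ structures. For $k\le 7$ the groups $\Omega^{Spin^c}_k=\mathbb{Z},0,\mathbb{Z},0,\mathbb{Z}\oplus\mathbb{Z},0,\mathbb{Z}\oplus\mathbb{Z},0$ are torsion-free and generated in each degree by products of complex projective spaces, whose spin$^c$ Dirac indices and first Chern numbers realise all the invariants above; beyond $k=7$ the correspondence with $K$-theory degenerates just as $\Omega^{Spin}_k$ did in Proposition 1. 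Hence for dimensions $\le 7$ the classification of Weyl semimetals is precisely the list $\Omega^{Spin^c}_k$.

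The main obstacle is the doubling in degrees $k=4$ and $k=6$, where $\Omega^{Spin^c}_k=\mathbb{Z}\oplus\mathbb{Z}$ is strictly larger than the class-A group $K^{-k}(\text{pt})=\mathbb{Z}$: one must show that the second $\mathbb{Z}$ is physical and independent, not an artefact of cobordism bookkeeping. The intended resolution is that a Weyl semimetal genuinely possesses two independent integers — the bulk gapped-part invariant and the net monopole charge of its Weyl points — so a rank-two group is exactly right, whereas a gapped topological insulator in the same dimension sees only the diagonal copy. Making this precise requires exhibiting a concrete $k$-dimensional Weyl band structure that realises the generator of the ``extra'' $\mathbb{Z}$ with vanishing gapped-part invariant, verifying its independence, and confirming that the isomorphism range is exactly $k\le 7$; this is the delicate part of the argument.
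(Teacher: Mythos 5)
Your proposal follows essentially the same route as the paper: Weyl points as the obstruction to extending the spin structure that lives on $T\backslash W$, the spheres $S_W$ as characteristic surfaces mediating the extension to a spin$^c$ structure on all of $T$, the Mayer--Vietoris sequence matching the gapped-part and Weyl-charge data, and the pair consisting of the spin$^c$ Dirac index and the monopole charge realising $\Omega^{Spin^c}_4=\mathbb{Z}\oplus\mathbb{Z}$. The only real divergence is organisational --- you anchor the gapped part in complex $K$-theory (class A), whereas the paper retains the mod-8 indexing by the real symmetry class $s$ of the insulator part in its Table V --- and the ``delicate part'' you flag at the end (exhibiting a band structure that generates the extra $\mathbb{Z}$ independently of the gapped-part invariant) is not carried out in the paper either, so you have not omitted anything the paper actually supplies.
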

\begin{proof}
Let us denote the Weyl points in Weyl semimetals as $W$ which correspond to stable points in the band structure against perturbations and prevent the system being an insulator. The band structure of Weyl semimetals  whose BZ is denoted by $T$ are equivalent to the topological insulator band structures out of the open balls $D_W$ around Weyl points. So, BZ of a Weyl semimetal without Weyl points $T\backslash W$ has a spin structure, but this spin structure cannot be extended to the Weyl points. However, the spheres $S_W$ around Weyl points correspond to characteristic surfaces and spin$^c$ structures can be defined in all BZ including Weyl points via the extension procedure of spin structures to spin$^c$ structures through characteristic surfaces discussed above. Hence, we can obtain extendable spin$^c$ structures in Weyl semimetal BZ $T$ via spheres $S_W$ around Weyl points corresponding to characteristic surfaces from nonextendable spin structures in topological insulator BZ $T\backslash W$. As a result, the extensions of spin structures to spin$^c$ structures correspond to the extensions of topological insulators to topological Weyl semimetals. Then, topological Weyl semimetal classes are classified by spin$^c$ cobordism classes from the correspondence in Mayer-Vietoris sequences. Namely, the choices of different spin$^c$ structures in BZ correspond to different topological classes of Weyl semimetals. Similar to the case of spin Dirac operator $\displaystyle{\not}D$ and its relation with the choice of spin structures, the eigenvalue structure and index of the spin$^c$ Dirac operator $\displaystyle{\not}D^A$ coupled to the $U(1)$ potential $A$ is also dependent on the choice of spin$^c$ structure in BZ. So, the choice of the spin$^c$ structure determine the eigenvalue structure of $\displaystyle{\not}D_A$ and a change in the spin$^c$ structure leads to a change in the Weyl semimetal topological class. Weyl semimetals which are extensions of topological insulators in corresponding real symmetry classes can be classified by spin$^c$ cobordism groups as in Table V for dimensions $n\leq 7$. This table corresponds to the classification of Weyl semimetals for different classes and dimensions. The rows in the table denoted by the label $s$ corresponds to the real symmetry classes of the topological insulator parts of Weyl semimetals and not to the symmetry classes of Weyl semimetals themselves.
\end{proof}

\begin{table}
\centering
\begin{tabular}{c c c c c c c c c}

\hline \hline
$\Omega^{Spin^c}_{s-n(\text{mod 8})}$ & $n=0$ & $1$ & $2$ & $3$ & $4$ & $5$ & $6$ & $7$ \\ \hline
$s=0$ & $\mathbb{Z}$ & $0$ & $\mathbb{Z}\oplus\mathbb{Z}$ & $0$ & $\mathbb{Z}\oplus\mathbb{Z}$ & $0$ & $\mathbb{Z}$ & $0$ \\
$1$ & $0$ & $\mathbb{Z}$ & $0$ & $\mathbb{Z}\oplus\mathbb{Z}$ & $0$ & $\mathbb{Z}\oplus\mathbb{Z}$ & $0$ & $\mathbb{Z}$ \\
$2$ & $\mathbb{Z}$ & $0$ & $\mathbb{Z}$ & $0$ & $\mathbb{Z}\oplus\mathbb{Z}$ & $0$ & $\mathbb{Z}\oplus\mathbb{Z}$ & $0$ \\
$3$ & $0$ & $\mathbb{Z}$ & $0$ & $\mathbb{Z}$ & $0$ & $\mathbb{Z}\oplus\mathbb{Z}$ & $0$ & $\mathbb{Z}\oplus\mathbb{Z}$ \\
$4$ & $\mathbb{Z}\oplus\mathbb{Z}$ & $0$ & $\mathbb{Z}$ & $0$ & $\mathbb{Z}$ & $0$ & $\mathbb{Z}\oplus\mathbb{Z}$ & $0$ \\
$5$ & $0$ & $\mathbb{Z}\oplus\mathbb{Z}$ & $0$ & $\mathbb{Z}$ & $0$ & $\mathbb{Z}$ & $0$ & $\mathbb{Z}\oplus\mathbb{Z}$ \\
$6$ & $\mathbb{Z}\oplus\mathbb{Z}$ & $0$ & $\mathbb{Z}\oplus\mathbb{Z}$ & $0$ & $\mathbb{Z}$ & $0$ & $\mathbb{Z}$ & $0$ \\
$7$ & $0$ & $\mathbb{Z}\oplus\mathbb{Z}$ & $0$ & $\mathbb{Z}\oplus\mathbb{Z}$ & $0$ & $\mathbb{Z}$ & $0$ & $\mathbb{Z}$ \\ \hline \hline

\end{tabular}
\caption{Spin$^c$ cobordism groups for $s\leq 7$ and $n\leq 7$.}
\end{table}

The discontinuity at the common boundary of Weyl semimetals in different topological classes can be described in terms of spin$^c$ structures in a similar way to the topological insulator case. Let us consider two $(k+1)$-manifolds $M$ and $M'$ with a common $k$-dimensional boundary $N$ that is $\partial M=\partial M'=N$. If $H^2(N,\mathbb{Z})=0$, then there is a unique spin$^c$ structure on $N$ and different spin$^c$ structures on $M$ and $M'$ reduce to the same spin$^c$ structure on $N$. This means that if $M$ and $M'$ correspond to BZ of Weyl semimetals, then $M$ and $M'$ are in the same topological class since we have $\Omega^{Spin^c}_k=0$ because of the unique spin$^c$ structure on $N$. For example, if we have a Dirac Hamiltonian $\mathcal{H}_M$ on $M$ with spin$^c$ structure $\sigma_M$ and a Dirac Hamiltonian $\mathcal{H}_{M'}$ on $M'$ with spin$^c$ structure $\sigma_{M'}$, then they both reduce to a Dirac Hamiltonian $\mathcal{H}_N$ on the boundary $N$ with the same spin$^c$ structure $\sigma_N$ and hence they are in the same topological class. On the other hand, if we have $H^2(N,\mathbb{Z})\neq 0$, then there are more than one spin$^c$ structure on $N$ and if spin$^c$ structures of $M$ and $M'$ reduce to different spin$^c$ structures on $N$, then Weyl semimetals corresponding to $M$ and $M'$ are in different topological classes since we have $\Omega^{Spin^c}_k\neq 0$ in that case. For example, Dirac Hamiltonian $\mathcal{H}_M$ on $M$ with spin$^c$ structure $\sigma_M$ will reduce to Dirac Hamiltonian $\mathcal{H}_{N_1}$ on the boundary $N$ with spin$^c$ structure $\sigma_{N_1}$ and Dirac Hamiltonian $\mathcal{H}_{M'}$ on $M'$ with spin$^c$ structure $\sigma_{M'}$ will reduce to $\mathcal{H}_{N_2}$ on the boundary $N$ with another spin$^c$ structure $\sigma_{N_2}$ as shown in Fig. 2. Since the eigenvalue structures of $\mathcal{H}_M$ and $\mathcal{H}_{M'}$ are determined by the chosen spin$^c$ structures and these choices reduce to the different spin$^c$ structures on the common boundary they will be in different topological classes in this case.

\begin{figure}
  \includegraphics[width=\linewidth]{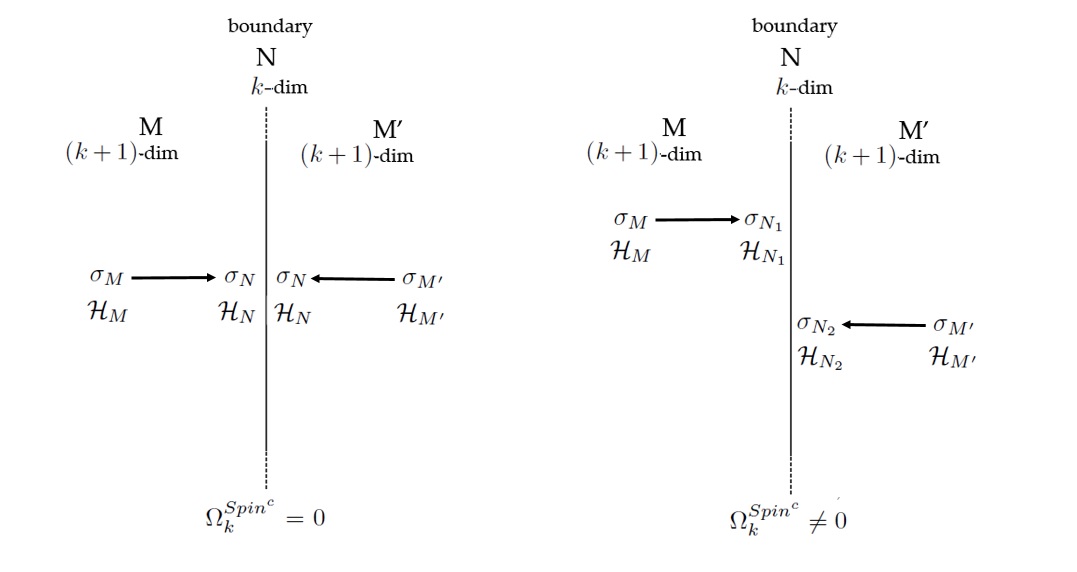}
  \caption{If the spin$^c$ cobordism group vanishes at the boundary, then the Hamiltonians with different spin$^c$ structures will reduce to the same spin$^c$ structure and Hamiltonian at the boundary. If the spin$^c$ cobordism group at the boundary is non-trivial, then the Hamiltonians with different spin$^c$ structures for different topological classes will reduce to different spin$^c$ structures and Hamiltonians at the boundary.}
  \label{fig:spinc}
\end{figure}

\subsection{Topological invariants}

Weyl points in Weyl semimetals are characterized by monopole charges corresponding to the Chern numbers defined from the Berry curvature on the spheres around Weyl points. Similarly, spin$^c$ structures are also characterized by monopole equations which are SW equations. So, the extension of topological insulators to Weyl semimetals via monopole charges on Weyl points can be described in terms of the extension of spin structures to spin$^c$ structures via the definition of SW invariants around singular points.

Let us consider a Weyl semimetal characterized by $\Omega^{Spin^c}_4$ with Brillouin torus $T$ and the collection of Weyl points $W$. Characteristic surfaces corresponding to the union of spheres around Weyl points are denoted by $S_W$. Topological invariants characterizing the Weyl semimetal arise from the spin$^c$ cobordism group $\Omega^{Spin^c}_4=\mathbb{Z}\oplus\mathbb{Z}$. The first $\mathbb{Z}$ corresponds to the topological insulator part invariant of $T\backslash W$ which takes values in $H^d(T\backslash W)$ and determined by the index of the Dirac Hamiltonian. The second $\mathbb{Z}$ corresponds to the Weyl point charge which takes values in $H^d(S_W)$ and determined by the Chern numbers around $S_W$.

Choice of spin$^c$ structures on a manifold is determined by SW invariants which are related to the index of the gauged Dirac operator and the monopole charge of the gauge curvature. SW invariants arise from SW equations which are written in 4-dimensions as follows \cite{Seiberg Witten, Friedrich}
\begin{eqnarray}
\displaystyle{\not}D^A\psi&=&0\nonumber\\
F^+_A&=&(\psi\overline{\psi})_2
\end{eqnarray}
where $\displaystyle{\not}D^A$ is the gauged Dirac operator coupled to $U(1)$ connection $A$ and $\psi$ is a section of the spin$^c$ bundle. $F^+_A$ is the self-dual part of the gauge curvature 2-form $F_A=dA$ and $(\psi\overline{\psi})_2$ is the bilinear 2-form constructed out of the spinor $\psi$. The number of solutions of SW equations determine the different spin$^c$ structures on the manifold.
\begin{proposition}
Topological invariants determined from Weyl semimetal equations are equivalent to the invariants resulting from SW equations.
\end{proposition}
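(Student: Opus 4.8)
The plan is to set up an explicit dictionary between the pair of equations that govern a Weyl semimetal near its Weyl points and the SW system, and then to read off that the associated topological invariants coincide summand by summand. First I would record the ``Weyl semimetal equations'' in a form parallel to the SW equations. The gaplessness condition $H({\bf k})\psi({\bf k})=0$ at a Weyl point $w_i$, linearised in ${\bf k}-w_i$, is the massless Weyl equation, which on the punctured Brillouin torus $T\backslash W$ equipped with its spin structure reads $\displaystyle{\not}D\psi=0$ for the spin Dirac operator built from the Bloch Hamiltonian; the second equation is the monopole law for the Berry curvature, $F_{ij}=\epsilon_{ijl}\frac{{\bf k}_l}{2|{\bf k}|^3}$ as in (8), equivalently the statement that $\frac{1}{2\pi}\int_{S_{w_i}}F=\deg(w_i)\in\mathbb{Z}$.

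Second, using the construction behind Proposition~3 I would promote the spin structure on $T\backslash W$ to a spin$^c$ structure on all of $T$, taking the spheres $S_W$ as the characteristic surfaces and declaring the Berry connection to be the $U(1)$ connection $A$ of that spin$^c$ structure; then $F_A=dA$ is precisely the Berry curvature (8), and $c_1$ of the determinant line bundle is $\sum_i\deg(w_i)\,[S_{w_i}]^\vee\in H^2(T;\mathbb{Z})$. Under this identification the linearised zero-mode equation of the Weyl semimetal becomes exactly the first SW equation $\displaystyle{\not}D^A\psi=0$, with $\psi$ a section of the spin$^c$ bundle. For the second SW equation I would check that the monopole profile (8) is (anti-)self-dual near $w_i$ and that the bilinear $(\psi\overline{\psi})_2$ formed from the normalised zero mode reproduces this profile, so that $F^+_A=(\psi\overline{\psi})_2$ holds. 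This last step is the main obstacle: one has to match the nonlinear curvature constraint of Seiberg-Witten theory against the linear monopole equation of the Weyl semimetal, and I expect this to succeed only in the small-sphere, near-Weyl-point limit and only after a concrete choice of the zero-mode ansatz, with the global statement then following from the cobordism invariance of both sides and the gluing $T=(T\backslash W)\cup D_W$ along $S_W$.

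Finally, with the dictionary in place the invariants match by inspection. On the Weyl semimetal side the topological-insulator-part invariant is $\text{ind}(\displaystyle{\not}D^A)$, valued in $H^d(T\backslash W)$, and the Weyl charge is $\frac{1}{2\pi}\int_{S_W}F_A$, valued in $H^d(S_W)$; these are the two $\mathbb{Z}$ summands of $\Omega^{Spin^c}_4=\mathbb{Z}\oplus\mathbb{Z}$. On the Seiberg-Witten side the invariants are, by definition, the index of the gauged Dirac operator $\displaystyle{\not}D^A$ and the monopole charge of $F_A$. Hence the two families of invariants are identified, which together with Proposition~3 and the Mayer-Vietoris description of $H^*(T)$ completes the proof; the same dictionary applies in each real symmetry class $s$ of the gapped part and each dimension $n\leq 7$, so the equivalence is compatible with Table~V.
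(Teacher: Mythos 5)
Your overall strategy --- build a dictionary between the pair of Weyl semimetal equations and the SW system, then match the invariants summand by summand with the two $\mathbb{Z}$'s of $\Omega^{Spin^c}_4$ --- is the same as the paper's, and your final identification (index of $\displaystyle{\not}D^A$ for the gapped part, monopole charge of $F_A$ over $S_W$ for the Weyl charge) coincides with the paper's conclusion. Where you diverge is precisely at the step you yourself flag as the main obstacle. You propose to verify the second SW equation $F^+_A=(\psi\overline{\psi})_2$ literally, by inserting the monopole profile (8) and a concrete normalised zero-mode ansatz and checking (anti-)self-duality near each $w_i$; you correctly observe that this would at best work in a small-sphere limit, and you do not carry it out. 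The paper does not attempt this at all. Instead it argues at the level of the \emph{form} of the equations: it writes the spin$^c$ bilinear as $(\psi\overline{\psi})_2=(\psi,e_b.e_a.\psi)\,e^a\wedge e^b=\pm(e_b.\psi,e_a.\psi)\,e^a\wedge e^b$, rewrites the Berry curvature as $<\nabla_{[i}\Psi|\nabla_{j]}\Psi>$, and then declares the two equal under the substitution $e^a\rightarrow i\nabla_a$ of coframe basis elements by momentum-space derivatives, with the parallel identification $\displaystyle{\not}D^A=e^a.(\nabla_{X_a}+i_{X_a}A)\leftrightarrow\mathcal{H}^A(k_i)=i\nabla_i+A_i$ for the first equation. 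In other words, the paper's ``equivalence'' is a formal structural correspondence between the two systems, not a solution-by-solution match, and the equality of Chern numbers of $F^+_A$ and of the Berry curvature is asserted on that basis.

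So the honest assessment is that your route is more demanding than the paper's and is left with an acknowledged gap (the nonlinear SW curvature constraint versus the linear monopole law), while the paper's route closes that gap only by a formal substitution that leaves the same analytic content unverified. If you want your version to compile into a proof of the proposition as the paper intends it, you should replace your third step by the paper's inner-product correspondence and the identification $e^a\rightarrow i\nabla_a$, and present the result as a correspondence of equations and of the invariants built from them (indices and Chern numbers), rather than as a claim that Berry-curvature configurations of Weyl semimetals literally solve the Seiberg--Witten equations.
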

\begin{proof}
We have the following equations for Weyl semimetals
\begin{eqnarray}
\mathcal{H}^A|\Psi>&=&E|\Psi>\nonumber\\
F_{ij}&=&<\Psi|\nabla_{[i}\nabla_{j]}\Psi>
\end{eqnarray}
where $\mathcal{H}^A$ is the Dirac Hamiltonian of the Weyl semimetal with gauge coupling, $|\Psi>$ is the eigenstate and $E$ is the eigenvalue of the Hamiltonian. $F_{ij}$ is the Berry curvature defined from the Berry connection $A_i=-i<\Psi|\nabla_i\Psi>$ as $F_{ij}=\nabla_iA_j-\nabla_jA_i$ where $i,j$ label the components. These equations correspond to gauged harmonic spinor equation and curvature equality of SW equations and solutions of these equations determine the topological invariants of Weyl semimetals. For the spin$^c$ structures in the same cobordism class $\Omega^{Spin^c}_d$, the eigenvalue structure of $\displaystyle{\not}D^A$ does not change and we stay in the same topological class of a Weyl semimetal. The above identification between SW equations and Weyl semimetal equations is relevant in 3 space and 1 time dimensions in the corresponding symmetry class for which Weyl points are well-defined and we have $\Omega^{Spin^c}_4=\mathbb{Z}\oplus\mathbb{Z}$. For other dimensions, generalizations of SW equations can be considered corresponding to the Dirac Hamiltonians and Berry curvatures in higher dimensions for $A_i=-i<\Psi|\nabla_i\Psi>$ and $F_{ij}=-2i<\Psi|\nabla_{[i}\nabla_{j]}\Psi>$. So, we have a correspondence between spin$^c$ structures and Weyl semimetal classes via Dirac operators and gauge connections;
\begin{eqnarray}
\text{spin$^c$ structure}\longrightarrow \displaystyle{\not}D\,\,,A \longrightarrow \text{Weyl semimetal}\nonumber
\end{eqnarray}
To see the equivalence between self-dual curvature $F^+_A$ in SW equations and the Berry curvature $F_{ij}$, one can check the correspondence between spin$^c$ inner products and Hermitian inner products in Hilbert space of quantum states. For the right hand side of the curvature equality of SW equations, spin$^c$ inner product $(\,,\,)$ can be used to construct the bilinear 2-form $(\psi\overline{\psi})_2$ as follows
\begin{eqnarray}
(\psi\overline{\psi})_2&=&(\psi, e_b.e_a.\psi)e^a\wedge e^b\nonumber\\
&=&\pm(e_b.\psi, e_a.\psi)e^a\wedge e^b
\end{eqnarray}
where $e^a$ are coframe basis, $.$ denotes Clifford multiplication and $\wedge$ is the wedge product of forms. On the other hand, the right hand side of the Berry curvature equality can be written as
\begin{equation}
<\Psi|\nabla_{[i}\nabla_{j]}\Psi>=<\nabla_{[i}\Psi|\nabla_{j]}\Psi>.
\end{equation}
Then, we have the following equivalence between coframe basis and partial derivatives in terms of momentum coordinates; $e^a\rightarrow i\nabla_{a}$. So, the Berry curvature is equivalent to the self-dual curvature condition of SW equations written in terms of momentum space coordinates for the quantum Hilbert space. Chern numbers defined from the integrals of self-dual curvature and the Berry curvature are also equal to each other. Similarly, tha gauged Dirac operator $\displaystyle{\not}D^A=e^a.(\nabla_{X_a}+i_{X_a}A)$ and the Dirac Hamiltonian with gauge coupling $\mathcal{H}^A(k_i)=i\nabla_i+A_i$ are equivalent to each other. 
\end{proof}

Spin$^c$ cobordism invariants for the presence of non-trivial classes in 0, 2, 4 and 6 dimensions can be described as follows \cite{Wan Wang}. In 0-dimensions, we have $\Omega^{Spin^c}_0=\mathbb{Z}$ and cobordism invariant correspond to the first Chern number $c_1$. In 2-dimensions, we have $\Omega^{Spin^c}_2=\mathbb{Z}$ and the cobordism invariant corresponds to $c_1/2$. Here $c_1$ is divided by 2 since $c_1\text{ mod }2=w_2(TM)$ and the second Stiefel-Whitney class vanishes on spin$^c$ 2-manifolds. In 4-dimensions, $\Omega^{Spin^c}_4=\mathbb{Z}\oplus\mathbb{Z}$ and the corresponding cobordism invariant for the first component is $\frac{\sigma-\Sigma.\Sigma}{8}$ where $\sigma$ is the signature invariant of the spin$^c$ 4-manifold $M$ and $\Sigma$ is a characteristic surface of it with $.$ denoting the intersection of surfaces. From the Rokhlin's theorem, $\sigma-\Sigma.\Sigma$ is a multiple of 8 and the invariant corresponds to the index of spin$^c$ Dirac operator $\displaystyle{\not}D^A$
\begin{eqnarray}
\text{ind }\displaystyle{\not}D^A=\frac{1}{8}(\sigma-\Sigma.\Sigma).
\end{eqnarray}
The invariant for the second component correspond to the Chern number $c_1^2$ along the characteristic surface which is the monopole charge. Then, spin$^c$ cobordism invariants correspond to Weyl semimetal topological invariants in $\Omega^{Spin^c}_4$ case which contain the topological insulator part $\text{ind}\displaystyle{\not}D^A$ and the monopole charge. In 6-dimensions, we have $\Omega^{Spin^c}_6=\mathbb{Z}\oplus\mathbb{Z}$ and the corresponding cobordism invariants are $c_1^3/2$ and $c_1\frac{\sigma}{16}$.

\quad\\
{\centering{
\begin{tabular}{c c c c c c c c c}

\hline \hline
$k$ & $0$ & $1$ & $2$ & $3$ & $4$ & $5$ & $6$ & $7$ \\ \hline
$\Omega^{Spin^c}_k$ & $c_1$\, & $\,\,-\,\,$ & $\frac{c_1}{2}$ & $\,\,-\,\,$ & $\frac{\sigma-\Sigma.\Sigma}{8}\oplus c_1^2$\,\, & $\,\,-\,\,$ & $\frac{c_1^3}{2}\oplus c_1\frac{\sigma}{16}\,\,$ & $-$ \\ \hline \hline

\end{tabular}}
\quad\\
\quad\\
\quad\\}

So, these invariants label the Weyl semimetal classes in various symmetry classes and dimensions and can be calculated from Weyl semimetal Hamiltonians and Berry curvatures by the equivalence of (10) and (11).

\subsection{Fermi arcs and spin$^c$ structures}

Fermi arcs are gapless surface modes of Weyl semimetals which connect the projections of the Weyl points with charges of opposite sign at the boundary of the semimetal. They are experimental signatures of Weyl semimetals.
\begin{proposition}
Fermi arcs at the boundaries of Weyl semimetals depend on the choice of spin$^c$ structures on the BZ of Weyl semimetals.
\end{proposition}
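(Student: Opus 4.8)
The plan is to trace the chain
\[
\text{spin$^c$ structure on }T\;\longrightarrow\;\displaystyle{\not}D^A\,,\,A\;\longrightarrow\;\text{ind}\,\displaystyle{\not}D^A\text{ on }T\backslash W\;\longrightarrow\;\text{gapless surface modes}
\]
and to identify the last object with the Fermi arcs. First I would recall from Section~II that a Fermi arc on the surface Brillouin zone is precisely the locus of surface momenta over which the gapped-part invariant of the bulk, restricted to the two-dimensional momentum plane fibred over that surface momentum, is nontrivial; its endpoints are the projections of the Weyl points $W$, i.e.\ the surface momenta at which this invariant jumps by the monopole charge of the enclosed Weyl point. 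Thus the existence and location of a Fermi arc is governed entirely by the family of topological-insulator invariants of $T\backslash W$ indexed by the surface momenta, together with the Weyl charges on $S_W$.

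Second, by Proposition~3 together with the Mayer--Vietoris identification preceding it, this gapped-part invariant is exactly the $H^d(T\backslash W)$ summand of the spin$^c$ cobordism class of $T$, hence a component of the chosen spin$^c$ structure, while the complementary $H^d(S_W)$ summand is the monopole charge $c_1^2$ along the characteristic surface $S_W$; by Proposition~4 the former equals $\text{ind}\,\displaystyle{\not}D^A$ and the latter is controlled by the curvature (SW-type) equation. Since a Fermi arc connects projections of Weyl points of opposite monopole charge, it is the boundary shadow of the region of $T$ across which the spin structure on $T\backslash W$ fails to extend through $S_W$ --- that is, the region where only a spin$^c$ structure, carrying its $H^2(T;\mathbb{Z})$ label, exists. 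Changing that label, i.e.\ moving within or between the $\mathbb{Z}$-summands of $\Omega^{Spin^c}$ in Table~V, changes $\text{ind}\,\displaystyle{\not}D^A$ on $T\backslash W$ and hence the number and endpoints of the arcs.

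Third, I would invoke the boundary analysis of Section~IV: writing the bulk Brillouin zone as a $(k+1)$-manifold with surface-Brillouin-zone boundary $N$, the chosen spin$^c$ structure on the bulk restricts to one on $N$, and when $H^2(N,\mathbb{Z})\neq 0$ inequivalent bulk spin$^c$ structures reduce to inequivalent ones on $N$; by exactly the reasoning used for Proposition~3, this non-vanishing of $\Omega^{Spin^c}_k$ at the boundary forces gapless boundary modes of $\displaystyle{\not}D^A$ localized between the projected Weyl points, and these are the Fermi arcs. Combining this with the first two steps yields the claim: a change of spin$^c$ structure on $T$ changes the eigenvalue structure of $\displaystyle{\not}D^A$ on $T\backslash W$ and, through the bulk-boundary correspondence, the Fermi-arc content on $N$.

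The hard part will be the third step --- making the bulk-boundary correspondence precise at the level of spin$^c$ Dirac operators, i.e.\ showing that the restriction of a non-extendable spin$^c$ structure produces exactly one gapless surface branch per unit of $\text{ind}\,\displaystyle{\not}D^A$, anchored at the projected Weyl points --- together with checking that this count is independent of the chosen characteristic-surface representative of the integral lift $\underline{w}$ of $w_2(T)$. Modulo this, the statement follows by combining Propositions~3 and~4 with the boundary discussion already carried out in the spin case.
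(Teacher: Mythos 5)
Your proposal takes a genuinely different route from the paper, and the route you chose has a gap that you yourself flag but do not close. The paper's own proof does not go through the bulk--boundary correspondence at all. It instead invokes a direct mathematical identification: Fermi arcs of Weyl semimetals correspond to \emph{Euler structures} of vector fields (the Mathai--Thiang picture), Euler structures are classified by Turaev's torsion invariants, and by Salamon's theorem there is a bijection between Euler structures and spin$^c$ structures, both being torsors over (naturally identified with) $H^2(M;\mathbb{Z})$. That bijection is the entire content of the paper's argument: once Fermi arcs are recast as Euler structures, their dependence on the spin$^c$ structure is immediate, with no need to analyze how $\text{ind}\,\displaystyle{\not}D^A$ restricts to the surface Brillouin zone or to count gapless branches per unit of index.

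The concrete gap in your version is exactly the step you isolate as ``the hard part'': showing that the restriction of a non-extendable spin$^c$ structure to the boundary $N$ produces one gapless surface branch per unit of bulk index, anchored at the projected Weyl points, independently of the chosen characteristic-surface representative of the integral lift of $w_2(T)$. Nothing in Propositions~3 and~4 or in the boundary discussion of Section~IV establishes this; those results relate spin$^c$ structures to bulk invariants and to the eigenvalue structure of $\displaystyle{\not}D^A$, but they do not supply a bulk--boundary index theorem for spin$^c$ Dirac operators, and without it your chain terminates before reaching the arcs. If you want to complete the argument along the paper's lines, the missing ingredient is the Euler-structure reformulation of Fermi arcs and the Turaev--Salamon correspondence between Euler structures and spin$^c$ structures; if you want to complete it along your own lines, you would need to prove the surface-mode counting statement, which is a substantially harder analytic result than anything the paper actually uses.
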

\begin{proof}
In mathematical terms, Fermi arcs at the boundaries of Weyl semimetals correspond to Euler structures of vector fields \cite{Mathai Thiang1} and Euler structures are classified by torsion invariants \cite{Turaev}. Moreover, there is a bijection between Euler structures and spin$^c$ structures \cite{Salamon}. This means that SW invariants and torsion invariants are equal to each other and both Euler structures and spin$^c$ structures can be naturally identified with the second cohomology group $H^2(M;\mathbb{Z})$ of the manifold $M$. So, the Fermi arcs at the boundaries of Weyl semimetals depend on the choice of the spin$^c$ structure. This corresponds to the spin$^c$ cobordism classification of Weyl semimetals via the classification of Fermi arcs at the boundary. Namely, the Fermi arcs of Weyl semimetals are results of spin$^c$ cobordism classification of Weyl semimetals.
\end{proof}

\section{Conclusion}

Classification of topological insulators and superconductors in different symmetry classes and dimensions can be managed in various ways. These approaches include the method of Clifford algebras and K-theory, analysing the stability of gapless boundary states against perturbations, dimensional reduction method of massive Dirac Hamiltonians and the method of Clifford chessboard and indices of Dirac operators. In this paper, we extend these classification approaches to spin cobordism groups via the homomorphism between K-theory groups and spin cobordism groups in dimensions less than or equal to seven. In that way, we interpret the classes of topological insulators in terms of possible spin structure choices in the BZ of materials. We also extend this discussion to the case of Weyl semimetals. By using the equivalence between the extension of spin structures to spin$^c$ structures and the extension of topological insulators to Weyl semimetals, we propose a classification scheme for Weyl semimetals in terms of spin$^c$ cobordism groups. The equivalence of topological invariants for Weyl semimetals and spin$^c$ cobordism groups and the relation between Fermi arcs and the choices of spin$^c$ structures also support this approach.

Since there are other types of topological semimetals such as Dirac semimetals and topological nodal-line semimetals, the extension of this approach to other types of materials can be investigated in a future work. This approach can be extended to a more general framework in terms of the relations between symmetry protected topological phases and cobordisms \cite{Wan Wang, Kapustin, Kapustin Thorngren Turzillo Wang, Yonekura} The generalization of spin and spin$^c$ groups to pin, pin$^c$ and Clifford groups can be used in this more general investigation. Moreover, this work can also lead to a complete framework for more general topological materials in terms of spin geometry.


\end{document}